\definecolor{darkred}{rgb}{0.8,0.1,0.1}
\definecolor{darkblue}{RGB}{0,76,156}
\definecolor{darkkblue}{RGB}{0,0,153}
\definecolor{blue2}{RGB}{102,178,255}
\newtheorem{proposition}{Proposition}
\newtheorem{lemma}[proposition]{Lemma}
\newtheorem{theorem}[proposition]{Theorem}
\newtheorem{remark}{Remark}
\newtheorem{corollary}[proposition]{Corollary}
\def\squareforqed{\hbox{\rlap{$\sqcap$}$\sqcup$}}
\def\qed{\ifmmode\squareforqed\else{\unskip\nobreak\hfil
\penalty50\hskip1em\null\nobreak\hfil\squareforqed
\parfillskip=0pt\finalhyphendemerits=0\endgraf}\fi}
\def\endenv{\ifmmode\;\else{\unskip\nobreak\hfil
\penalty50\hskip1em\null\nobreak\hfil\;
\parfillskip=0pt\finalhyphendemerits=0\endgraf}\fi}
\mathchardef\ordinarycolon\mathcode`\:
\def\vcentcolon{\mathrel{\mathop\ordinarycolon}}
\def\resetMathstrut@{%
    \setbox\z@\hbox{%
        \mathchardef\@tempa\mathcode`\[\relax
        \def\@tempb##1"##2##3{\the\textfont"##3\char"}%
        \expandafter\@tempb\meaning\@tempa \relax
    }%
    \ht\Mathstrutbox@\ht\z@ \dp\Mathstrutbox@\dp\z@}
\newcommand{\nc}{\newcommand}
\nc{\rnc}{\renewcommand}
\nc{\beg}{\begin{equation}}
\nc{\eeq}{{\end{equation}}}
\nc{\beqa}{\begin{eqnarray}}
\nc{\eeqa}{\end{eqnarray}}
\nc{\lbar}[1]{\overline{#1}}
\nc{\bra}[1]{\langle#1|}
\nc{\ket}[1]{|#1\rangle}
\nc{\ketbra}[2]{|#1\rangle\!\langle#2|}
\nc{\braket}[2]{\langle#1|#2\rangle}
\nc{\proj}[1]{| #1\rangle\!\langle #1 |}
\nc{\avg}[1]{\langle#1\rangle}
\nc{\Rank}{\operatorname{Rank}}
\nc{\smfrac}[2]{\mbox{$\frac{#1}{#2}$}}
\nc{\tr}{\operatorname{Tr}}
\nc{\ox}{\otimes}
\nc{\dg}{\dagger}
\nc{\dn}{\downarrow}
\nc{\cA}{{\cal A}}
\nc{\cB}{{\cal B}}
\nc{\cC}{{\cal C}}
\nc{\cD}{{\cal D}}
\nc{\cE}{{\cal E}}
\nc{\cF}{{\cal F}}
\nc{\cG}{{\cal G}}
\nc{\cH}{{\cal H}}
\nc{\cI}{{\cal I}}
\nc{\cJ}{{\cal J}}
\nc{\cK}{{\cal K}}
\nc{\cL}{{\cal L}}
\nc{\cM}{{\cal M}}
\nc{\cN}{{\cal N}}
\nc{\cO}{{\cal O}}
\nc{\cP}{{\cal P}}
\nc{\cQ}{{\cal Q}}
\nc{\cR}{{\cal R}}
\nc{\cS}{{\cal S}}
\nc{\cT}{{\cal T}}
\nc{\cV}{{\cal V}}
\nc{\cU}{{\cal U}}
\nc{\cX}{{\cal X}}
\nc{\cY}{{\cal Y}}
\nc{\cZ}{{\cal Z}}
\nc{\cW}{{\cal W}}
\nc{\csupp}{{\operatorname{csupp}}}
\nc{\qsupp}{{\operatorname{qsupp}}}
\nc{\var}{{\operatorname{var}}}
\nc{\rar}{\rightarrow}
\nc{\lrar}{\longrightarrow}
\nc{\polylog}{{\operatorname{polylog}}}
\nc{\wt}{{\operatorname{wt}}}
\nc{\av}[1]{{\left\langle {#1} \right\rangle}}
\nc{\supp}{{\operatorname{supp}}}
\nc{\dia}{{\diamondsuit }}
\def\ve{\varepsilon}
\def\x{\xi}
\def\o{\omega}
\nc{\RR}{{{\mathbb R}}}
\nc{\CC}{{{\mathbb C}}}
\nc{\FF}{{{\mathbb F}}}
\nc{\NN}{{{\mathbb N}}}
\nc{\ZZ}{{{\mathbb Z}}}
\nc{\PP}{{{\mathbb P}}}
\nc{\QQ}{{{\mathbb Q}}}
\nc{\UU}{{{\mathbb U}}}
\nc{\EE}{{{\mathbb E}}}
\nc{\id}{{\operatorname{id}}}
\nc{\CHSH}{{\operatorname{CHSH}}}
\nc{\be}{\begin{equation}}
\nc{\ee}{{\end{equation}}}
\nc{\bea}{\begin{eqnarray}}
\nc{\eea}{\end{eqnarray}}
\nc{\Hom}[2]{\mbox{Hom}(\CC^{#1},\CC^{#2})}
\nc{\rU}{\mbox{U}}
\nc{\ob}[1]{#1}
\nc{\SEP}{{\text{SEP}}}
\nc{\NS}{{\text{NS}}}
\nc{\LOCC}{{\text{LOCC}}}
\nc{\PPT}{{\text{PPT}}}
\nc{\EXT}{{\text{EXT}}}
\nc{\Sym}{{\operatorname{Sym}}}
\nc{\HH}{\mathbb{H}}
\nc{\ERLO}{{E_{\text{r,LO}}}}
\nc{\ERLOCC}{{E_{\text{r,LOCC}}}}
\nc{\ERPPT}{{E_{\text{r,PPT}}}}
\nc{\ERLOCCinfty}{{E^{\infty}_{\text{r,LOCC}}}}
\nc{\Aram}{{\operatorname{\sf A}}}
\rnc{\bar}{\;\rule{0pt}{9.5pt}\right|\;}
\nc{\lset}{\left\{\left.}
\nc{\rset}{\right\}}
\nc{\lsetr}{\left\{}
\nc{\rsetr}{\right.\right\}}
\nc{\barr}{\left|\rule{0pt}{9.5pt}\;}
\let\id\1
\nc{\norm}[2]{\left\lVert#1\right\rVert_{#2\!}}
\nc{\lnorm}[2]{\left\lVert#1\right\rVert_{\ell_{#2}}}
\nc{\sectag}[1]{\vspace{0.2cm} \noindent \textit{{#1---}}}
\nc{\exampletag}[1]{\vspace{0.2cm} \noindent \textit{{Example #1.}}}
\nc{\bcF}{\boldsymbol{\cF}}
\nc{\bcO}{\boldsymbol{\cO}}
\nc{\bcS}{\boldsymbol{\cS}}
\nc{\bTheta}{\boldsymbol{\Theta}}
\nc{\coneF}{\textsf{Cone}({\bcF})}
\nc{\sub}{\text{sub}}
\def\grd@save@target#1{%
  \def\grd@target{#1}}
\def\grd@save@start#1{%
  \def\grd@start{#1}}
\tikzset{
  grid with coordinates/.style={
    to path={%
      \pgfextra{%
        \edef\grd@@target{(\tikztotarget)}%
        \tikz@scan@one@point\grd@save@target\grd@@target\relax
        \edef\grd@@start{(\tikztostart)}%
        \tikz@scan@one@point\grd@save@start\grd@@start\relax
        \draw[minor help lines,magenta] (\tikztostart) grid (\tikztotarget);
        \draw[major help lines] (\tikztostart) grid (\tikztotarget);
        \grd@start
        \pgfmathsetmacro{\grd@xa}{\the\pgf@x/1cm}
        \pgfmathsetmacro{\grd@ya}{\the\pgf@y/1cm}
        \grd@target
        \pgfmathsetmacro{\grd@xb}{\the\pgf@x/1cm}
        \pgfmathsetmacro{\grd@yb}{\the\pgf@y/1cm}
        \pgfmathsetmacro{\grd@xc}{\grd@xa + \pgfkeysvalueof{/tikz/grid with coordinates/major step}}
        \pgfmathsetmacro{\grd@yc}{\grd@ya + \pgfkeysvalueof{/tikz/grid with coordinates/major step}}
        \foreach \x in {\grd@xa,\grd@xc,...,\grd@xb}
        \node[anchor=north] at (\x,\grd@ya) {\pgfmathprintnumber{\x}};
        \foreach \y in {\grd@ya,\grd@yc,...,\grd@yb}
        \node[anchor=east] at (\grd@xa,\y) {\pgfmathprintnumber{\y}};
      }
    }
  },
  minor help lines/.style={
    help lines,
    step=\pgfkeysvalueof{/tikz/grid with coordinates/minor step}
  },
  major help lines/.style={
    help lines,
    line width=\pgfkeysvalueof{/tikz/grid with coordinates/major line width},
    step=\pgfkeysvalueof{/tikz/grid with coordinates/major step}
  },
  grid with coordinates/.cd,
  minor step/.initial=.2,
  major step/.initial=1,
  major line width/.initial=2pt,
}
\nc{\cam}{\affiliation{Department of Applied Mathematics and Theoretical Physics,\\ University of Cambridge, Cambridge, CB3 0WA, United Kingdom}}
\nc{\waterloo}{\affiliation{Institute for Quantum Computing, University of Waterloo, Waterloo, ON, N2L 3G1, Canada}}
\nc{\cadpi}{\affiliation{Perimeter Institute for Theoretical Physics, Waterloo, Ontario N2L 2Y5, Canada}}
\begin{document}
	\title{No-Go Theorems for Quantum Resource Purification}
	\author{Kun Fang}
	\email{kf383@cam.ac.uk}
	\cam \waterloo
	\author{Zi-Wen Liu}
	\email{zliu1@perimeterinstitute.ca}
	\cadpi
	
	\date{\today}

\begin{abstract}
The manipulation of quantum ``resources'' such as entanglement, coherence and magic states lies at the heart of quantum science and technology, empowering potential advantages over classical methods. In practice, a particularly important kind of manipulation is to ``purify'' the quantum resources, since they are inevitably contaminated by noise and thus often lose their power or become unreliable for direct usage. Here we prove fundamental limitations on how effectively generic noisy resources can be purified enforced by the laws of quantum mechanics, which universally apply to any reasonable kind of quantum resource. 
More explicitly, we derive nontrivial lower bounds on the error of converting any full-rank noisy state to any target pure resource state by any free protocol (including probabilistic ones)---it is impossible to achieve perfect resource purification, even probabilistically. Our theorems indicate strong limits on the efficiency of distillation, a widely used type of resource purification routine that underpins many key applications of quantum information science.  In particular, this general result induces the first explicit lower bounds on the resource cost of magic state distillation, a leading scheme for realizing scalable fault-tolerant quantum computation. Implications for the standard error-correction-based methods are specifically discussed. 
\end{abstract}

\maketitle

The field of quantum information takes a pragmatic approach to examining and utilizing quantum mechanics, seeking to obtain rigorous understandings of which information processing tasks can or cannot be accomplished according to the laws of nature. 
Efforts along this line since the 1980s have revolutionized our perception of physics and paved the way for many innovative technological applications such as quantum computation and communication \cite{dowling03,NielsenChuang}. 
In particular, the formulations of no-go (impossibility) theorems have played seminal roles---they often represent key advances in our understanding of quantum mechanics and have exerted profound influence on the development of quantum information science and technology.
A representative example is the no-cloning theorem~\cite{Wootters1982,Dieks1982}, which directly led to the invention of quantum error correction~\cite{Shor1995,Steane1996} and laid the foundation for plenty of other major quantum applications such as quantum cryptography~\cite{Bennett1984}, as well as advancing our understanding of the foundations of quantum mechanics \cite{Cloning_rmp,WoottersZurek}.

At the heart of the desired quantum information processing tasks is the manipulation of various useful quantum features, the most prominent examples being entanglement \cite{Horodecki:entanglement}, coherence \cite{coherenceRMP}, and ``magic'' \cite{bravyi2005,Veitch_2014}, that emerge as valuable ``resources'' that are needed to empower advantages over classical methods.  Such resource features can arise from all kinds of physical or conceptual restrictions on the feasible operations.  A prototypical example is the ``distant labs'' paradigm, where only local operations within the separate labs and classical communication between them (the so-called ``LOCC'') is allowed, rendering entanglement a resource that cannot be obtained for free and could, for instance, enable efficient quantum communication \cite{NielsenChuang,Horodecki:entanglement}. 

In practice, a particularly important and widely studied kind of manipulation is to ``purify'' the quantum resources, since quantum systems are highly susceptible to faulty controls and noise effects such as decoherence \cite{NielsenChuang,Preskill2018quantumcomputingin} that may jeopardize the power and reliability of quantum resources.   In particular, a standard procedure of quantum resource purification is to extract high-quality resource states better suited for application from a large amount of raw noisy 
ones, 
which is known as \emph{distillation}. 
Most notably, the distillation of entanglement \cite{BBPS96:ent_dist,BBPSSW96:ent_dist,BDSW96:ent_qec}, coherence \cite{WinterYang16,Fang2018,Regula2018} and magic states \cite{bravyi2005} has been extensively studied as key subroutines in quantum computation and communication. 
Therefore, understanding the limits to the efficiency of purification and distillation tasks is of great theoretical and practical importance.

To address this problem in a rigorous and general manner, we shall use the language of quantum resource theory (see \cite{ChitambarGour19} for an introduction of this framework), where each resource theory is defined by a set of \emph{free states} (in contrast to \emph{resource states}) and a set of \emph{free operations}.
Again take the entanglement theory as an example: the set of free states consists of the separable (unentangled) states, and LOCC is a standard choice of the set of free operations. Free states and operations can be flexibly defined, which gives rise to a wide variety of meaningful resource theories, as long as they follow a \emph{golden rule}: any free operation can only map a free state to another free state.  This simple rule selects the largest possible set of free operations allowed in resource manipulation, since any other operation can by definition create resources and thus trivialize the theory.   
Moreover, note that we are interested in the \emph{one-shot} setting as opposed to the conventional asymptotic setting here, since only a finite amount of resources is accessible in reality. We refer readers to Ref.~\cite{LBT19} for a general theory of the rates of one-shot resource manipulation.

In this work, we prove a set of no-go theorems for quantum resource purification that universally apply to any reasonable resource theory, manifesting that the production of any pure resource state with an arbitrarily small error, however weak this target state is, is generically prohibited by the golden rule. 
More formally, we establish quantitative bounds on the achievable accuracy of any free operation that is supposed to work with some probability.  
It turns out that there is a nontrivial trade-off between the accuracy and success probability, akin to the uncertainty relations.  
 The proofs follow from analyzing the peculiar properties of the hypothesis testing relative entropy monotone, a quantity known to characterize the efficiency of one-shot distillation in many cases \cite{BrandaoDatta11,fang2019distillation,Regula2018,ZLYCW19,YR16,WangWilde:magic,LBT19} but not studied in great depth.
Using the above results, we find lower bounds on the \emph{overhead} of distillation given by the number of copies of a certain primitive noisy state needed.  As a particularly important application, we derive specific lower bounds on the overhead of magic state distillation \cite{bravyi2005}, a leading proposal of fault-tolerant quantum computation \cite{NielsenChuang,Shor96,CampbellTerhalVuillot17}.  The consequent limitations to the common distillation schemes based on quantum error correction are discussed in relation to key advances in the search for better codes \cite{BravyiHaah12,HastingsHaah18,Haah2017magicstate,Haah2018codesprotocols}.  
Lastly, we provide a no-go theorem for the simulation of unitary resource channels, which is analogous to state purification, in accordance with the recent interest in extending conventional resource theory approaches for quantum states to quantum channels (see e.g.~\cite{Gour18comparison,Wilde18,PhysRevA.101.022335,TheurerEgloffZhangPlenio19,PhysRevX.9.031053,LiuYuan:channel,LiuWinter2018,GourWinter19} for general treatments).

We start by introducing the notations.
The sets of free operations and free states are respectively denoted by $\bcO$ and $\bcF$. 
They obey the golden rule that $\bcO \subseteq \widetilde\bcO$, where $\widetilde\bcO:=\{\cE \,|\, \forall \rho \in \bcF, \cE(\rho)\in \bcF\}$ (commonly known as the set of resource nongenerating operations in the literature).  
Note that virtually no assumptions on the specific properties of the resource theory are needed in this work, that is, $\bcF$ is almost completely up to one's choice, as long as there exists some resource pure state (technically, $\bcF$ is topologically closed and $\exists\,\psi\notin\bcF$) so that the purification task is well-defined. Even the convexity of $\bcF$, which is a common postulate for general resource theory results and frameworks (see e.g.~\cite{BrandaoGour15,LiuHuLloyd17:rdmap,TRBLA19:advantage,PhysRevA.101.022335,LBT19,PhysRevA.101.062315}), is not needed.

   The general goal of purification tasks is to transform some noisy primitive state to a pure target resource state by some protocol represented by a free operation.    In this work, we make a mild assumption that the density matrix representing the primitive state is full-rank, which holds generically for common noise effects and settings of practical interest such as multiple qubits.  We would also want to consider protocols that produce desired outputs with a certain probability, as long as we know when they do so (an important example being magic state distillation, as we shall discuss later).
To encompass such cases, consider the generalization of $\widetilde\bcO$ to the class $\widetilde\bcO_{\sub}:=\{\cL \,|\, \forall \rho \in \bcF, \exists\, t \geq 0, \sigma \in \bcF, \text{ s.t. } \cL(\rho) = t \cdot \sigma \}$, which consists of subnormalized quantum operations (sub-operations), i.e.~completely positive and trace-nonincreasing maps.
A free probabilistic protocol that transforms $\rho$ to $\gamma$ with probability $p$ and accuracy $1-\epsilon$ (or error $\epsilon$) is modeled by a quantum operation $\cE_{A\to XB}$ such that
$\cE_{A\to XB} (\rho_A) = \ket{0}\bra{0}_X\ox \cL_{A\to B}(\rho_A) + \ket{1}\bra{1}_X \ox \cG_{A\to B}(\rho_A)$. Here $X$ is an external flag register that keeps track of whether the protocol succeeds (0) or not (1); $\cL\in\bcO_{\sub}$ (any $\bcO_{\sub}\subseteq\widetilde{\bcO}_{\sub}$) is the free sub-operation representing the successful transformation such that $\cL_{A\to B}(\rho_A) = p \tau_B$ where $p = \mathrm{Tr}\cL(\rho)$ and $\tau$ is a density matrix satisfying $F(\tau,\gamma) \geq 1-\epsilon$ where $F(\rho,\sigma) := \|\sqrt{\rho}\sqrt{\sigma}\|_1^2$ is the fidelity between $\rho$ and $\sigma$.
The case where $\cL$ is a completely positive trace preserving (CPTP) map and thus $p=1$ corresponds to a deterministic protocol.

Now we are ready to introduce the explicit results.  
The following theorem reveals fundamental limitations on the accuracy and success probability of resource purification.
\begin{theorem}\label{thm: nogo 1}
Given any full-rank primitive state $\rho \not\in \bcF$ and any pure target resource state $\psi \not\in \bcF$, the following relation between the  success probability $p$ and transformation error $\epsilon$ must hold for any free probabilistic protocol:
\begin{equation}
\frac{\epsilon}{p} \geq  \frac{\lambda_{\min}(\rho)(1-f_\psi)}{1+R(\rho)}.
\end{equation}
where $\lambda_{\min}(\rho)$ is the smallest eigenvalue of $\rho$, $f_\psi:= \max_{\o\in \bcF} \tr (\psi \o)$ is the maximum overlap between $\psi$ and free states $\bcF$, and $R(\rho):= \min\{s| \exists\, s \geq 0, \text{state}~\sigma, \text{ s.t. } (\rho+ s\sigma)/(1+s) \in \bcF\}$ is the generalized robustness of state $\rho$.  For the deterministic case ($p=1$), the bound can be improved to $\epsilon \geq \lambda_{\min}(\rho)(1-f_\psi)$.
\end{theorem}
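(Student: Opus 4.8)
The plan is to leverage the full-rank hypothesis through a single operator inequality and to probe the output with the one measurement operator that the problem hands us for free, namely the target projector $\proj{\psi}$. Since every eigenvalue of $\rho$ is at least $\lambda_{\min}(\rho)$, we have $\rho \geq \lambda_{\min}(\rho)\1 \geq \lambda_{\min}(\rho)\,\omega$ for every state $\omega$, and in particular for every free $\omega\in\bcF$ (using $\omega\leq\1$). Consequently $\rho = \lambda_{\min}(\rho)\,\omega + (1-\lambda_{\min}(\rho))\,\xi$, where $\xi:=(\rho-\lambda_{\min}(\rho)\omega)/(1-\lambda_{\min}(\rho))$ is again a density matrix. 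The whole argument rests on the fact that a free protocol cannot erase the free summand $\omega$, so the output must retain a free contribution whose overlap with $\psi$ is capped by $f_\psi<1$.

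For the deterministic case, let $\cN$ be the free CPTP map with $\cN(\rho)=\tau$, so that $\tr(\proj{\psi}\tau)=F(\tau,\psi)\geq1-\epsilon$. Pushing the decomposition through $\cN$ gives $\tau=\lambda_{\min}(\rho)\,\cN(\omega)+(1-\lambda_{\min}(\rho))\,\cN(\xi)$ with $\cN(\omega)\in\bcF$ by the golden rule and $\cN(\xi)$ a state. Testing with $\proj{\psi}$ and bounding $\tr(\proj{\psi}\cN(\omega))\leq f_\psi$ and $\tr(\proj{\psi}\cN(\xi))\leq1$ yields $\tr(\proj{\psi}\tau)\leq 1-\lambda_{\min}(\rho)(1-f_\psi)$, which against $\tr(\proj{\psi}\tau)\geq1-\epsilon$ gives $\epsilon\geq\lambda_{\min}(\rho)(1-f_\psi)$ directly.

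For the probabilistic case the delicate choice is to take $\omega$ to be the optimizer $\omega_\rho$ of the generalized robustness, so that $\rho\leq(1+R(\rho))\,\omega_\rho$; since $\omega_\rho\leq\1$ also forces $\rho\geq\lambda_{\min}(\rho)\omega_\rho$, the same free state legitimately plays both roles. As $\omega_\rho\in\bcF$ and $\cL\in\widetilde\bcO_{\sub}$, we have $\cL(\omega_\rho)=t\,\sigma'$ with $\sigma'\in\bcF$ and $t:=\tr\cL(\omega_\rho)\geq0$, while $\cL(\xi)$ is subnormalized. Running the decomposition through $\cL$, testing with $\proj{\psi}$, and substituting $p=\lambda_{\min}(\rho)t+(1-\lambda_{\min}(\rho))\tr\cL(\xi)$ collapses everything to $p\,\tr(\proj{\psi}\tau)\leq p-\lambda_{\min}(\rho)\,t\,(1-f_\psi)$, i.e.\ $p\epsilon\geq\lambda_{\min}(\rho)\,t\,(1-f_\psi)$. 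Finally, applying the positive map $\cL$ to $(1+R(\rho))\omega_\rho-\rho\geq0$ and taking the trace gives $t\geq p/(1+R(\rho))$; combining the two inequalities produces the claimed $\epsilon/p\geq\lambda_{\min}(\rho)(1-f_\psi)/(1+R(\rho))$. The deterministic improvement is then transparent: a trace-preserving $\cN$ forces $t=1$ exactly rather than the lossier $t\geq1/(1+R(\rho))$, removing the robustness factor.

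I expect the probabilistic step to be the crux, precisely the bookkeeping of how much free weight $t$ survives $\cL$ relative to the total surviving weight $p$. The operator $\proj{\psi}$ is exactly the optimal hypothesis test separating $\tau$ from $\bcF$, and the estimate $t\geq p/(1+R(\rho))$ is the SDP-dual face of the generalized robustness; this is where the hypothesis testing relative entropy monotone alluded to in the text makes its appearance, and checking that $\omega_\rho$ may serve simultaneously in the lower-bound decomposition and the robustness upper bound is the one place that needs care. Notably, convexity of $\bcF$ is never invoked---each step only uses membership of the individual states $\omega_\rho$ and $\sigma'$ in $\bcF$---so the argument covers the non-convex theories emphasized in the setup.
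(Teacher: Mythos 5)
Your proof is correct, and it follows a genuinely different and more elementary route than the paper's. The paper argues by contradiction through the hypothesis-testing relative entropy: its ingredients are a continuity lemma stating that $\min_{\omega\in\bcF} D_H^\epsilon(\rho\|\omega) \leq \log\bigl(\lambda_{\min}(\rho)/(\lambda_{\min}(\rho)-\epsilon)\bigr)$ for full-rank $\rho$ (proved via von Neumann's trace theorem), the lower bound $-\log f_\psi$ on this same quantity for any state $\epsilon$-close to $\psi$, the data-processing inequality of $D_H^\epsilon$, and, for the probabilistic case, a flagged-state lemma together with the same robustness estimate you use. You replace all of that machinery with the single decomposition $\rho = \lambda_{\min}(\rho)\,\omega + (1-\lambda_{\min}(\rho))\,\xi$, valid for \emph{every} free $\omega$ because $\rho \geq \lambda_{\min}(\rho)\1 \geq \lambda_{\min}(\rho)\,\omega$; pushing it through the free map and testing with $\psi$ does the rest, the golden rule capping the surviving free component's overlap at $f_\psi$. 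Your estimate $t \geq p/(1+R(\rho))$, obtained by applying the positive map $\cL$ to $(1+R(\rho))\,\omega_\rho - \rho \geq 0$, is exactly the robustness lemma (Lemma S3) of the Supplemental Materials, there derived by linearity of $\tr\cL$ on the convex decomposition of $\omega_\rho$; and the point you flag as needing care—using $\omega_\rho$ simultaneously in both roles—is indeed harmless for the reason you give, since the lower bound $\rho \geq \lambda_{\min}(\rho)\,\omega$ holds for \emph{any} state $\omega$. Beyond being self-contained and treating the deterministic case as the special instance $t=1$, your argument in fact proves something slightly stronger than the statement: combining $p\epsilon \geq \lambda_{\min}(\rho)\,t\,(1-f_\psi)$ with $t \geq p/(1+R(\rho))$ and cancelling $p$ yields $\epsilon \geq \lambda_{\min}(\rho)(1-f_\psi)/(1+R(\rho))$ with no factor of $p$ at all, which implies the stated $\epsilon/p$ bound since $p\leq 1$. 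What the paper's heavier route buys instead is the continuity property of the hypothesis-testing monotone itself, which is the conceptual point advertised in the main text and ties these no-go results to one-shot distillation rates characterized by $D_H^\epsilon$. (Two shared, harmless caveats: both arguments implicitly assume $R(\rho)<\infty$, the bound being vacuous otherwise, and $p>0$ for the final division.)
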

Notice that $f_\psi<1$ always holds by its definition, so the bound is always greater than zero, meaning that there is always a neighborhood of any $\psi$ that cannot be reached by any free protocol.   
This theorem establishes an ``uncertainty relation'' between the accuracy and success probability of purification characterized by a regime of $\{\epsilon,p\}$ that is not achievable by any free protocol, as illustrated in Fig.~\ref{fig: forbidden area}. 
In particular, by letting $\epsilon =0$ we directly rule out the possibility of perfect purification:
\begin{corollary}
    It is impossible to exactly transform a full-rank primitive state to a pure target resource state by any free protocol, even probabilistically. 
\end{corollary}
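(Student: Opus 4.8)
The plan is to obtain this impossibility statement as the $\epsilon=0$ specialization of Theorem~\ref{thm: nogo 1}. A perfect probabilistic purification is, by definition, a free protocol that attains error $\epsilon=0$ while still succeeding with some nonzero probability $p>0$ (a protocol with $p=0$ never produces an output and purifies nothing). Feeding $\epsilon=0$ into the inequality of Theorem~\ref{thm: nogo 1} gives $0=\epsilon/p \ge \lambda_{\min}(\rho)(1-f_\psi)/(1+R(\rho))$, so it suffices to show that the right-hand side is \emph{strictly positive}: this produces an immediate contradiction and rules out perfect purification even in the probabilistic case. The deterministic case is handled the same way using the sharper bound $\epsilon\ge\lambda_{\min}(\rho)(1-f_\psi)$, which does not even reference $R(\rho)$.

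The key step is therefore to verify that every factor of the bound is well-behaved. First, $\lambda_{\min}(\rho)>0$ precisely because $\rho$ is full-rank, which is the standing hypothesis on the primitive state. Second, $1-f_\psi>0$: since $\psi$ is pure, $\tr(\psi\omega)=1$ forces $\omega=\psi$, but $\psi\notin\bcF$ while $\bcF$ is closed (hence compact, being a subset of the density operators on a fixed finite-dimensional space), so the continuous function $\omega\mapsto\tr(\psi\omega)$ attains its maximum $f_\psi<1$ on $\bcF$. Third, $R(\rho)<\infty$, so that $1+R(\rho)$ is a finite denominator. Putting these together makes the right-hand side a strictly positive constant, in contradiction with $\epsilon/p=0$, and the corollary follows.

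I expect the one point genuinely requiring care to be the finiteness of the generalized robustness $R(\rho)$, since the corollary is otherwise instant. I would establish it by exhibiting an explicit free state to mix with: in the resource theories of interest the maximally mixed state $\1/d$ is free, and for any $\rho$ one may take $s=d\,\lambda_{\max}(\rho)-1$ together with $\sigma=\tfrac1s\big(\tfrac{1+s}{d}\1-\rho\big)$, which is a valid density operator (nonnegative because $\tfrac{1+s}{d}\1=\lambda_{\max}(\rho)\1\ge\rho$, and of unit trace) making $(\rho+s\sigma)/(1+s)=\1/d\in\bcF$; here $s>0$ because $\rho\ne\1/d$, as $\1/d\in\bcF$ but $\rho\notin\bcF$. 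This yields $R(\rho)\le d\,\lambda_{\max}(\rho)-1<\infty$. More generally, finiteness holds whenever $\bcF$ contains any state whose support includes that of $\rho$, so the full-rank hypothesis is exactly what guarantees a finite, strictly positive bound and hence the impossibility of exact purification.
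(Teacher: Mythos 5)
Your proof is correct and follows essentially the same route as the paper's one-line derivation: set $\epsilon=0$ in Theorem~\ref{thm: nogo 1} and observe, as the paper remarks immediately after the theorem, that the right-hand side is strictly positive. Your compactness argument for $f_\psi<1$ (closed $\bcF$ inside the compact set of density operators, purity of $\psi$, $\psi\notin\bcF$) is exactly the justification the paper leaves implicit in ``$f_\psi<1$ always holds by its definition.''

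The one place you genuinely add something is the finiteness of $R(\rho)$, and your instinct that this is the step requiring care is well founded: the paper silently assumes $R(\rho)<\infty$, but its stated framework (only closedness of $\bcF$ and existence of a resource pure state) does not guarantee it. Your fix imports an extra hypothesis ($\1/d\in\bcF$, or more generally a free state whose support contains that of $\rho$), and some such hypothesis is unavoidable: if $\bcF$ contains no full-rank state, then $R(\rho)=\infty$ for any full-rank $\rho$, the probabilistic bound of Theorem~\ref{thm: nogo 1} degenerates to the vacuous $\epsilon/p\geq 0$, and the corollary cannot be extracted from it---indeed it can then fail outright. For instance, take $\bcF=\{\ketbra{0}{0}\}$ on a qubit and $\cL(\cdot)=K(\cdot)K^\dagger$ with $K=\ket{+}\bra{1}$; this is a free sub-operation (it maps $\ketbra{0}{0}$ to $0$), yet it maps any full-rank $\rho$ exactly to the resource pure state $\ketbra{+}{+}$ with probability $\bra{1}\rho\ket{1}>0$. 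So your extra assumption is doing real work rather than bookkeeping. One slip in your closing sentence: full-rankness of $\rho$ does \emph{not} guarantee finiteness of $R(\rho)$---on the contrary, it is precisely what forces the mixing free state to be full rank---so the accurate statement is that the corollary holds whenever $\bcF$ contains a full-rank state, which is true in all the standard theories (entanglement, coherence, magic) but is an assumption beyond the paper's minimal ones.
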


\begin{figure}[]
\begin{center}
\includegraphics{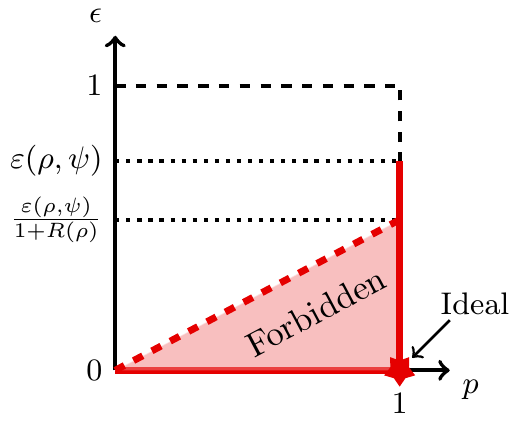}
\end{center}
\caption{Interplay between the transformation error $\epsilon$ and success probability $p$. The lower right corner represents the most ideal scenario where $\epsilon$ is small and $p$ is large. The red region and solid lines represent the forbidden regime such that no purification protocol with the corresponding parameters can exist. $\ve(\rho,\psi) = \lambda_{\min}(\rho)(1-f_\psi)$.
}
\label{fig: forbidden area}
\end{figure}
Below we sketch our approach to proving the above results. See the Supplemental Materials for the detailed proof and extended discussions.
\begin{proof}
(Sketch) The cornerstone of our proof is an information-theoretic quantity called the quantum hypothesis testing relative entropy \cite{buscemi_2010,WangRenner12:hypothesis_testing}, which is defined as $D_H^\epsilon(\rho\|\sigma) := -\log \min \big\{\tr M\sigma\,|\tr \rho M \geq 1-\epsilon,\, 0 \leq M \leq \1\big\}$ for two quantum states $\rho$ and $\sigma$.
The induced resource measure given by $\mathfrak{D}_H^\epsilon(\rho):= \min_{\omega\in\bcF}D_H^\epsilon(\rho\|\omega)$, which was recently related to the rates of certain one-shot resource trading tasks \cite{LBT19}, is shown to exhibit a peculiar property: for any full-rank $\rho$, it vanishes at $\epsilon=0$ and is continuous around it.   The proof then follows from suitably combining this property with the monotonicity of $\mathfrak{D}_H$ (nonincreasing under free operations).   
\end{proof}
Note that Ref.~\cite{marvian2020coherence} reached a similar conclusion for time-translationally invariant operations in coherence theory.
Also note that the full-rank assumption and the error bound can be improved in certain cases by different proof methods, which will be elaborated in follow-up works.    

Remarkably, the noisy primitive state $\rho$ could be much more valuable in terms of other resource measures and tasks or live in much higher dimensions than the pure target state ${\psi}$.  
However, the possibility of trading $\rho$ for ${\psi}$, even probabilistically, is ruled out.  This should be contrasted with the case of pure input $\rho$, where there are no such limitations. 
An illustrative toy example in terms of the theory of coherence is given in Fig.~\ref{fig: coherence example}, where $\rho$ is a slightly noisy version of the maximally coherent state $\ket{+}$ (which can be arbitrarily close to $\ket{+}$), while ${\psi}$ is a pure target state very close to the basis (incoherent) state $\ket{1}$. It is clear from geometrical intuitions that common coherence measures (see e.g.~\cite{coherenceRMP}) assign much greater value to $\rho$ than to ${\psi}$, and it is known that $\ket{+}$ can be transformed to any other state, including ${\psi}$ \cite{PhysRevLett.113.140401,WinterYang16}. However, our results indicate that there is always a neighborhood of ${\psi}$ that cannot be reached starting from $\rho$.
This highlights the special role of $\mathfrak{D}_H$ among all resource measures, and indicates sharp distinctions between pure state transformation problems and mixed state ones.
\begin{figure}[]
\begin{center}
\includegraphics{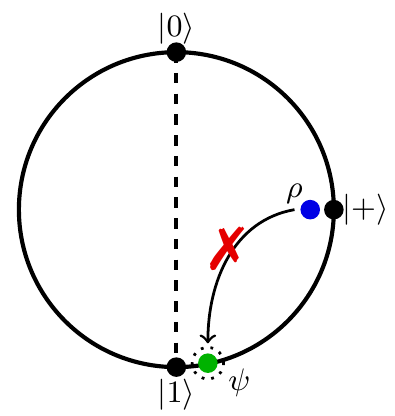}
\end{center}
\caption{A qubit coherence theory example illustrated using the Bloch sphere. Here $\rho$ is a mixed state close to the maximally coherent state $\ket{+}$, and $\psi$ is a pure state close to basis state $\ket{1}$. Our no-go theorems indicate that an arbitrarily accurate probabilistic transformation from $\rho$ to $\psi$ is impossible.}
\label{fig: coherence example}
\end{figure}

The following scheme of resource purification, usually known as ``distillation'' or ``concentration,'' is of the greatest practical interest: one has access to multiple copies of some noisy primitive resource state, and the goal is to ``distill'' certain useful pure resource states to some desired accuracy by free operations while consuming as few copies of the primitive state as possible.  Most notably, the distillation of entanglement \cite{BBPS96:ent_dist,BBPSSW96:ent_dist,BDSW96:ent_qec}, coherence \cite{WinterYang16,Fang2018,Regula2018} and magic states \cite{bravyi2005} has been extensively studied as a key subroutine in quantum communication and computation. 
Therefore, the amount of primitive states needed to accomplish the desired distillation, namely the resource cost or overhead, is a key figure of merit for distillation protocols.  
To present the most general result, we consider error on the entire output state (which could be a collection of unit states) for now.     
As we now show, our no-go theorems indicate fundamental lower bounds on the total overhead of distillation.   
\begin{theorem}\label{thm: total overhead}
    Consider the task of distilling some pure target resource state $\psi$ with error at most $\epsilon$,
    from $n$ copies of primitive state ${\hat\rho}$.  
    For any full-rank ${\hat\rho}$, there does not exist any probabilistic protocol with success probability $p$ that accomplishes the task if the following is not satisfied:
    \begin{equation}
        n \geq \log_{\frac{1+R({\hat\rho})}{\lambda_{\min}({\hat\rho})}}\frac{(1-f_\psi)p}{\epsilon}.    \label{eq:prob_n_main}
    \end{equation}
For deterministic case ($p=1$), the bound can be improved to $n \geq \log_{{1}/{\lambda_{\min}({\hat\rho})}}{(1-f_\psi)}/{\epsilon}$.
\end{theorem}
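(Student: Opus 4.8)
The plan is to obtain Theorem~\ref{thm: total overhead} as a direct corollary of Theorem~\ref{thm: nogo 1}, by instantiating the primitive state as the $n$-fold tensor power $\rho=\hat\rho^{\otimes n}$ and then re-expressing the two state-dependent quantities $\lambda_{\min}(\hat\rho^{\otimes n})$ and $R(\hat\rho^{\otimes n})$ in terms of their single-copy counterparts. First I would observe that if $\hat\rho$ is full-rank then so is $\hat\rho^{\otimes n}$, and since $\psi\notin\bcF$ by hypothesis, Theorem~\ref{thm: nogo 1} applies verbatim to give, for any free probabilistic protocol achieving error $\epsilon$ with success probability $p$,
\begin{equation}
\frac{\epsilon}{p} \geq \frac{\lambda_{\min}(\hat\rho^{\otimes n})\,(1-f_\psi)}{1+R(\hat\rho^{\otimes n})}. \nonumber
\end{equation}
Everything then reduces to lower-bounding the right-hand side in terms of single-copy data.

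The next step is to control the two tensor-power quantities. The spectrum of $\hat\rho^{\otimes n}$ consists of all $n$-fold products of eigenvalues of $\hat\rho$, so its minimum is attained by choosing the smallest eigenvalue in every factor, giving exactly $\lambda_{\min}(\hat\rho^{\otimes n})=\lambda_{\min}(\hat\rho)^n$. For the denominator I would establish the submultiplicativity $1+R(\hat\rho^{\otimes n})\leq(1+R(\hat\rho))^n$. The cleanest route uses the reformulation $1+R(\rho)=\min\{\lambda\geq 0\,|\,\rho\leq\lambda\,\omega,\ \omega\in\bcF\}$ together with the elementary fact that $0\leq\rho\leq A$ and $0\leq\tau\leq B$ imply $\rho\otimes\tau\leq A\otimes B$ (since $A\otimes B-\rho\otimes\tau=(A-\rho)\otimes B+\rho\otimes(B-\tau)\geq 0$). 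Taking an optimal $\omega\in\bcF$ with $\hat\rho\leq(1+R(\hat\rho))\,\omega$ and iterating yields $\hat\rho^{\otimes n}\leq(1+R(\hat\rho))^n\,\omega^{\otimes n}$ with $\omega^{\otimes n}\in\bcF$, whence the claimed submultiplicativity. Combining the two estimates gives $\epsilon/p\geq\big(\lambda_{\min}(\hat\rho)/(1+R(\hat\rho))\big)^n(1-f_\psi)$.

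The final step is purely algebraic. Writing the base as $a:=\lambda_{\min}(\hat\rho)/(1+R(\hat\rho))$, one has $0<a<1$, so the necessary condition $a^n\geq \epsilon/\big(p(1-f_\psi)\big)$ rearranges, upon taking $\log$ (which flips the inequality because $\ln a<0$), into $n\geq\log_{(1+R(\hat\rho))/\lambda_{\min}(\hat\rho)}\big((1-f_\psi)p/\epsilon\big)$, exactly~\eqref{eq:prob_n_main}. Via the contrapositive, any alleged protocol with $n$ below this threshold would violate Theorem~\ref{thm: nogo 1} and hence cannot exist. The deterministic improvement follows by the same argument applied to the sharper deterministic bound $\epsilon\geq\lambda_{\min}(\rho)(1-f_\psi)$ of Theorem~\ref{thm: nogo 1}: here $R$ never enters, so I would simply use $\lambda_{\min}(\hat\rho^{\otimes n})=\lambda_{\min}(\hat\rho)^n$ and solve $\lambda_{\min}(\hat\rho)^n\leq\epsilon/(1-f_\psi)$ to obtain $n\geq\log_{1/\lambda_{\min}(\hat\rho)}\big((1-f_\psi)/\epsilon\big)$.

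The main obstacle is the submultiplicativity of the generalized robustness, as this is the only ingredient not handed to us directly by Theorem~\ref{thm: nogo 1}. It rests on the assumption that tensor products of free states remain free, which is natural and essentially forced once one speaks of distilling from multiple copies; I would state this tensor-closure property explicitly as a standing hypothesis of the distillation setting. All remaining steps---the eigenvalue computation and the logarithmic rearrangement---are routine.
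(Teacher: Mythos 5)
Your proposal is correct and takes essentially the same route as the paper: instantiate Theorem~\ref{thm: nogo 1} with $\hat\rho^{\otimes n}$, use $\lambda_{\min}(\hat\rho^{\otimes n})=\lambda_{\min}(\hat\rho)^n$ together with $1+R(\hat\rho^{\otimes n})\leq(1+R(\hat\rho))^n$, and rearrange the resulting inequality (with the deterministic case handled by the sharper bound without $R$). The only difference is cosmetic: the paper proves the robustness submultiplicativity by expanding $\omega^{\otimes n}=\big((\hat\rho+R(\hat\rho)\tau)/(1+R(\hat\rho))\big)^{\otimes n}$ and collecting cross terms into a density operator $\tau'$, whereas you tensor the equivalent operator inequalities $\hat\rho\leq(1+R(\hat\rho))\,\omega$; both arguments rest on the same tensor-closure property $\omega^{\otimes n}\in\bcF$, which the paper invokes as axiomatic and you state as an explicit hypothesis.
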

\begin{proof}
Let $\hat{\rho}^{\otimes n}$ be the primitive state in Theorem~\ref{thm: nogo 1}.  Notice that $\lambda_{\min}(\hat{\rho}^{\otimes n}) = \lambda_{\min}(\hat{\rho})^n$.  For the deterministic case, Theorem~\ref{thm: nogo 1} implies that for any full-rank state $\hat{\rho}$,
we have
\begin{equation}
    \epsilon\geq \lambda_{\min}(\hat{\rho}^{\otimes n})(1-f_\psi) = \lambda_{\min}(\hat{\rho})^n(1-f_\psi). 
\end{equation}
    This directly translates to $n \geq \log_{{1}/{\lambda_{\min}({\hat\rho})}}{(1-f_\psi)}/{\epsilon}$. 
For the probabilistic case, note the following: by the definition of $R(\hat{\rho})$, there exists some state $\tau$ such that $\hat{\rho} + R(\hat{\rho})\tau = (1+R(\hat{\rho}))\omega$ where $\omega\in \cF$. By expanding this equation, we obtain
\begin{equation}
    \omega^{\otimes n} = \frac{1}{(1+R(\hat{\rho}))^n}\hat{\rho}^{\otimes n} + \frac{(1+R(\hat{\rho}))^n-1}{(1+R(\hat{\rho}))^n}\tau',
\end{equation}
where $\omega^{\otimes n}\in\cF$ axiomatically \cite{BrandaoGour15} and $\tau'$ is a density operator.  Therefore, $1+R(\hat{\rho}^{\otimes n}) \leq (1+R(\hat{\rho}))^n$.  Now, by Theorem 1, for any full-rank state $\hat{\rho}'$,
we have
\begin{equation}
    \epsilon/p \geq \frac{\lambda_{\min}(\hat{\rho}'^{\otimes n})(1-f_\psi)}{1+R(\hat{\rho}'^{\otimes n})} \geq \frac{\lambda_{\min}(\hat{\rho}')^n(1-f_\psi)}{(1+R(\hat{\rho}'))^n}.
\end{equation}
This directly translates to Eq.~(\ref{eq:prob_n_main}).
\end{proof}

The above theorem indicates that for distillation protocols that succeed with at least a constant probability (that does not vanish when reducing the target $\epsilon$), the total overhead must scale at least as $\Omega(\log(1/\epsilon))$ as $\epsilon\rightarrow 0$.



As an important application, we discuss magic state distillation~\cite{bravyi2005}, which is a major component of the leading scheme for fault tolerance \cite{NielsenChuang,Shor96,CampbellTerhalVuillot17}. 
Here, the so-called Clifford operations are considered free since they admit fault-tolerant implementations thanks to stabilizer codes \cite{Gottesman96,Gottesman97:thesis,CRSS97,NielsenChuang}, but meanwhile their computational power is very limited---due to the celebrated Gottesman-Knill theorem, they can even be efficiently simulated by classical computers \cite{Gottesman98,AaronsonGottesman04,NielsenChuang}. To achieve universal quantum computation, one needs non-Clifford gates such as $T = \mathrm{diag}(1, e^{i\pi/4})$.   A standard approach is to distill high-quality magic state $\ket{T}=(\ket{0}+e^{i\pi/4}\ket{1})/\sqrt{2}$ from sufficiently many noisy magic states offline, and then use an approximate $\ket{T}$ state to emulate each low-error logical $T$-gate in the circuit via a technique called state injection or gadgetization \cite{GottesmanChuang99:injection}.  
Since the resource cost of this magic state distillation component is dominant in the entire scheme, it is crucial to understand the ultimate limitations to its efficiency.

We now address this problem by tailoring our general results to the practical magic state distillation settings, providing the first rigorous understanding of the resources required for fault-tolerance schemes.
(Note that the resource theory ideas have advanced our understanding of magic states and quantum computation in various other ways \cite{Veitch_2014,HowardCampbell17:magic_rt,WangWilde:magic,CampbellSeddon19,WangWildeSu19:magic_channel,Beverland19:lowerbound}.)
Known protocols for magic state distillation are commonly based on concatenating error correction subroutines using stabilizer codes to probabilistically produce an output with sufficiently high quality upon passing the syndrome measurements.  
The output could take the form of a large state with each marginal sufficiently close to a unit target state, in which case we are also interested in the {average overhead}, i.e.~the total overhead divided by the number of marginals.  
Here we only showcase the $T$-state result, but the bounds for other useful magic states (see e.g.~\cite{Haah2018codesprotocols}) can be similarly obtained by plugging in corresponding parameters.



\begin{theorem}\label{thm: magic overhead}
    Consider the following general form of magic state distillation task: given $n$ copies of full-rank primitive magic states ${\hat\rho}$, output an $m$-qubit state $\tau$ such that $\tr \tau_i T = \bra{T}\tau_i\ket{T} \geq 1-\epsilon, \forall i = 1,...,m$ where $\tau_i = \tr_{\overline{i}}\tau$ is the $i$-th qubit.  
    Then 
    the average overhead of any free probabilistic protocol that succeeds with probability $p$ must obey
    \begin{equation}\label{eq:magic overhead}
        C := n/m \geq \frac{1}{m}\log_{\frac{1+R({\hat\rho})}{\lambda_{\min}({\hat\rho})}}   \frac{((4-2\sqrt{2})^m -1)p}{(4-2\sqrt{2})^m m \epsilon}.
    \end{equation}
\end{theorem}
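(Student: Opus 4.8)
The plan is to obtain Theorem~\ref{thm: magic overhead} as a specialization of Theorem~\ref{thm: total overhead}, taking the global target to be the pure state $\psi = \ket{T}^{\otimes m}$ in the $m$-qubit stabilizer resource theory (the free set $\bcF$ being the convex hull of the stabilizer states) and the primitive to be $\hat\rho^{\otimes n}$. Two ingredients are needed to bridge the gap between the \emph{marginal} guarantee assumed in the theorem and the \emph{global} hypotheses of Theorem~\ref{thm: total overhead}: first, a conversion of the per-qubit conditions $\bra{T}\tau_i\ket{T}\ge 1-\epsilon$ into a bound on the global fidelity $F(\tau,\ket{T}^{\otimes m})$; and second, an exact evaluation of the maximum stabilizer overlap $f_{\ket{T}^{\otimes m}}$.

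For the first ingredient I would use a union-bound operator inequality. Writing $P_i = \proj{T}_i$ for the projector onto $\ket{T}$ on the $i$-th qubit (tensored with identity elsewhere), the $P_i$ commute since they act on distinct factors, so for commuting projectors the Bonferroni-type bound
\begin{equation}
\prod_{i=1}^m P_i \;\ge\; \1 - \sum_{i=1}^m (\1 - P_i)
\end{equation}
holds (by induction, using $(\1-P_1)(\1-P_2)=\1-P_1-P_2+P_1P_2\ge \1-P_1-P_2$). Taking the expectation in $\tau$ and noting $\prod_i P_i = \proj{T}^{\otimes m}$ and $\tr(P_i\tau)=\bra{T}\tau_i\ket{T}\ge 1-\epsilon$ yields
\begin{equation}
F(\tau,\ket{T}^{\otimes m}) = \tr\big(\proj{T}^{\otimes m}\tau\big) \ge 1 - \sum_{i=1}^m\big(1-\bra{T}\tau_i\ket{T}\big) \ge 1-m\epsilon,
\end{equation}
so the global transformation error obeys $\epsilon' := 1 - F(\tau,\ket{T}^{\otimes m}) \le m\epsilon$, i.e.\ the given protocol is a free probabilistic protocol producing $\ket{T}^{\otimes m}$ with error at most $m\epsilon$ and success probability $p$.

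For the second ingredient, since $\psi$ is pure the maximization $f_\psi = \max_{\omega\in\bcF}\tr(\psi\omega)$ is attained at an extreme point of $\bcF$, hence at a pure stabilizer state, so $f_{\ket{T}^{\otimes m}} = \max_{s}|\braket{s}{T^{\otimes m}}|^2$ over stabilizer states $\ket{s}$. For a single qubit the nearest stabilizer states to $\ket{T}$ are $\ket{+}$ and the $Y$-eigenstate $\ket{+i}$, giving $f_{\ket{T}} = |\braket{+}{T}|^2 = (2+\sqrt2)/4 = (4-2\sqrt2)^{-1}$. The crux is then the multiplicativity $f_{\ket{T}^{\otimes m}} = f_{\ket{T}}^{\,m} = (4-2\sqrt2)^{-m}$. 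The product of single-qubit optimizers already achieves $f_{\ket{T}}^{\,m}$, so only the upper bound $f_{\ket{T}^{\otimes m}}\le f_{\ket{T}}^{\,m}$ is nontrivial: one must show that entangled stabilizer states cannot improve the overlap with $\ket{T}^{\otimes m}$. This is where I expect the real work to lie. The $m=2$ case is already instructive—parametrizing entangled $2$-qubit stabilizer states as local-Clifford images of $\ket{\Phi^+}$ and applying Cauchy--Schwarz shows their overlap is at most $1/2 < f_{\ket{T}}^{\,2}$—but the general case requires controlling the supremum over the exponentially large $m$-qubit stabilizer set, which I would handle via the multiplicativity of the stabilizer fidelity (e.g.\ using the discrete-Wigner / characteristic-function structure of stabilizer states).

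Finally I assemble. Applying Theorem~\ref{thm: total overhead} with target $\ket{T}^{\otimes m}$, primitive $\hat\rho^{\otimes n}$, and global error $\epsilon'\le m\epsilon$, and using monotonicity of the logarithm, gives
\begin{equation}
n \ge \log_{\frac{1+R(\hat\rho)}{\lambda_{\min}(\hat\rho)}} \frac{(1-f_{\ket{T}^{\otimes m}})\,p}{\epsilon'} \ge \log_{\frac{1+R(\hat\rho)}{\lambda_{\min}(\hat\rho)}} \frac{\big(1-(4-2\sqrt2)^{-m}\big)\,p}{m\epsilon},
\end{equation}
where the second inequality substitutes $f_{\ket{T}^{\otimes m}}=(4-2\sqrt2)^{-m}$ and $\epsilon'\le m\epsilon$. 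Dividing by $m$ and rewriting $1-(4-2\sqrt2)^{-m}=\big((4-2\sqrt2)^m-1\big)/(4-2\sqrt2)^m$ reproduces Eq.~(\ref{eq:magic overhead}).
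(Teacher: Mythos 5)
Your proposal is correct and follows essentially the same route as the paper's proof: a union bound turns the marginal guarantees into $\bra{T^{\otimes m}}\tau\ket{T^{\otimes m}} \geq 1-m\epsilon$, the stabilizer overlap $f_{T^{\otimes m}} = (4-2\sqrt{2})^{-m}$ is invoked, and everything is plugged into Theorem~\ref{thm: total overhead} with target $\ket{T}^{\otimes m}$ and error $m\epsilon$. The only difference is one of detail: you prove the commuting-projector union bound explicitly and sketch the multiplicativity of the stabilizer fidelity, whereas the paper treats the latter as a known result and cites the literature for it, so no fresh work is actually required at that step.
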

\begin{proof}
By applying the union bound, we have $\bra{T^{\otimes m}}\tau\ket{T^{\otimes m}} \geq 1-m\epsilon$. Also notice that $f_{T^{\otimes m}} = (4-2\sqrt{2})^{-m}$ \cite{Campbell11:catalysis,BravyiGosset16,Bravyi18:lowrank,LBT19}.  By plugging everything into Eq.~({\ref{eq:prob_n_main}}) we obtain the claimed bound.
\end{proof}

In the analyses of magic state distillation protocols, one is particularly interested in the exponent $\gamma$ in the asymptotic average overhead $O(\log^\gamma(1/\epsilon))$ as $\epsilon\rightarrow 0$.
A subtlety of our lower bound is that the output size $m$ could depend on the target $\epsilon$ for specific protocols.   Thus, to understand the scaling, one needs to take into account the behavior of $m$ as well.     
There are two key implications of our bound to code-based distillation protocols.  Assuming nonvanishing success probability (the passing probability of deeper rounds of concatenation converges sufficiently fast to one), we conclude the following: 
(i) It is impossible to construct a protocol with sublogarithmic average overhead ($\gamma < 1$) with any $[n,k,d]$ code such that $k\leq d$. This can be seen by plugging $m = k^\nu$ and $\log(1/\epsilon) \sim d^\nu$ into Eq.~(\ref{eq:magic overhead}). 
This in particular implies a $\gamma\geq 1$ bound for $k=1$ codes, in response to open questions raised in e.g.~\cite{BravyiHaah12,HastingsHaah18}. Note that the best known such codes allow $\gamma \rightarrow 2$ \cite{Haah2017magicstate,Haah2018codesprotocols}, so there is still potential room for improvement.
(ii) Any $\gamma < 1$ protocol must have a scale (size of the output) that diverges under concatenation.   It was actually believed that no codes allowing $\gamma < 1$ exist \cite{BravyiHaah12}, but the recent breakthrough work by Hastings and Haah \cite{HastingsHaah18} gives a peculiar example of such a code (see also \cite{KrishnaTillich18}), prompting the question of whether there is any fundamental limit.
(There, indeed, the codes employed have $k>d$.)  Our results indicate that, although the average overhead of such a protocol is considered low, its output size must grow rapidly as we reduce $\epsilon$, which inevitably blows up the overall cost. 
Finally, we make a basic extension to the channel resource theory setting (see e.g.~\cite{LiuWinter2018,Gour18comparison,LiuYuan:channel}), a more general setting of surging interest recently, which directly applies to quantum channels, gates, and dynamical processes, etc.
We show that, under the analogous golden rule, it is generally impossible to perfectly transform a noisy quantum channel into a unitary resource channel.  
A straightforward implication of this result is that the zero-error quantum capacity of generic noisy channels, e.g.~the depolarizing channel, is zero. 
 See the Supplemental Materials for detailed statements and proofs. More comprehensive studies of the channel setting will be left for follow-up.




To conclude, this work establishes quantitative bounds on the accuracy and efficiency of purifying noisy quantum resources and thus draws practical boundaries for quantum error correction and mitigation, by employing one-shot quantum resource theory techniques.
Our results universally apply to quantum resources of any reasonable kind.  The bounds depend only on very few parameters that concisely encode relevant properties of the noise, the target state, and the resource theory, and are thus easy to analyze.
Like the no-cloning theorem, our ``no-purification'' theorems stem from fundamental laws of quantum mechanics at bottom. 
We demonstrate the power and practical relevance of our general methods by establishing strong lower bounds on the overhead of distillation tasks (e.g.~magic state distillation), which provide rigorous understandings of and useful benchmarks for the resource requirements of practical quantum technologies, in particular fault-tolerant quantum computation, as the Heisenberg limit did for quantum metrology. 

An important future work is to investigate to what extent our various bounds can be approached, both by general means and in specific theories. For instance, it remains to be checked how close the state-of-the-art entanglement purification protocols (see e.g.~\cite{Krastanov2019optimized}) are to the fundamental limits set here. We also expect our general, primary results to see improvements in various cases and more generally, stimulate further studies on optimal quantum resource purification.  It would also be interesting to further understand the approximate and probabilistic regimes of unitary channel simulation, due to its connections to the fields of quantum Shannon theory, gate and circuit synthesis etc.  In sum, a key message of this work is that the cost of practically implementing quantum technologies 
or experiments could not be indefinitely improved in general, due to noise effects.  As we are now witnessing an exciting paradigm shift from blueprinting quantum advantages in theory to actually putting them into practice \cite{Preskill2018quantumcomputingin,supremacy19},  we anticipate that such rigorous understandings of the fundamental obstacles will serve as an important guideline and have far-reaching implications for quantum science and technology.    


\smallskip
\emph{ Acknowledgments.} We thank Earl Campbell, Daniel Gottesman, Gilad Gour, Jeongwan Haah, Sirui Lu, Bartosz Regula, Ryuji Takagi, Julio I.~de Vicente, Andreas Winter for helpful discussions and feedback, and the anonymous referees for several valuable comments.  K.F. is supported by the University of Cambridge Isaac Newton
Trust Early Career Grant No.~RG74916.
Z.-W.L. is supported by Perimeter Institute for Theoretical Physics.
Research at Perimeter Institute is supported by the Government
of Canada through Industry Canada and by the Province of Ontario through the Ministry
of Research and Innovation.

\bibliography{nogo}


\clearpage

\onecolumngrid
\begin{center}
\vspace*{.5\baselineskip}
{\textbf{\large Supplemental Materials: No-Go Theorems for Quantum Resource Purification}}\\[1pt] \quad \\
\end{center}

\renewcommand{\theequation}{S\arabic{equation}}
\renewcommand{\thetheorem}{S\arabic{theorem}}
\setcounter{equation}{0}
\setcounter{figure}{0}
\setcounter{table}{0}
\setcounter{section}{0}

In the Supplemental Materials, we provide more detailed proofs and discussions of several results in the main text. We may reiterate some
of the steps to make the Supplemental Materials more explicit and self-contained.

\section{Technical Lemmas}

Recall the definition of quantum hypothesis testing relative entropy as
\begin{align}
  D_H^\epsilon(\rho\|\sigma) := -\log \beta_\epsilon(\rho\|\sigma),\quad \text{with}\quad \beta_\epsilon(\rho\|\sigma) := \min \big\{\tr M\sigma\,|\tr \rho M \geq 1-\epsilon,\, 0 \leq M \leq \1\big\}.
\end{align}

The following technical lemmas will be used in the proofs of our main results.

\begingroup
\renewcommand{\theproposition}{S1}
\begin{lemma}\label{lem: continuity bound}
For any full rank states $\rho$ and any quantum state $\sigma$, their quantum hypothesis testing relative entropy is continuous around $\epsilon = 0$. That is, for any $0 \leq \epsilon < \lambda_{\min}(\rho)$ where $\lambda_{\min}(\rho)$ is the smallest eigenvalue of $\rho$, it holds that
\begin{align}
 0 \leq D_H^\epsilon(\rho\|\sigma)\leq \log \frac{\lambda_{\min}(\rho)}{\lambda_{\min}(\rho) - \epsilon}.
\end{align}
\end{lemma}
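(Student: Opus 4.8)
The plan is to prove the two inequalities separately, each directly from the variational definition of $\beta_\epsilon(\rho\|\sigma)$. The lower bound $D_H^\epsilon(\rho\|\sigma)\geq 0$ is immediate: the choice $M=\1$ is feasible, since $\tr\rho\1 = 1 \geq 1-\epsilon$ and $0\leq\1\leq\1$, and it gives objective value $\tr\sigma = 1$; hence $\beta_\epsilon(\rho\|\sigma)\leq 1$ and $D_H^\epsilon(\rho\|\sigma) = -\log\beta_\epsilon(\rho\|\sigma) \geq 0$.

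For the upper bound, which is equivalent to $\beta_\epsilon(\rho\|\sigma)\geq (\lambda_{\min}(\rho)-\epsilon)/\lambda_{\min}(\rho)$, I would lower bound the objective $\tr M\sigma$ uniformly over all feasible $M$. The crucial leverage is that full-rankness supplies the operator inequality $\rho \geq \lambda_{\min}(\rho)\1$. Given any feasible $M$, I pass to the complementary operator $\1-M\geq 0$; the constraint $\tr\rho M\geq 1-\epsilon$ rewrites as $\tr\rho(\1-M)\leq\epsilon$, and combining with $\rho\geq\lambda_{\min}(\rho)\1$ yields
\begin{equation}
\lambda_{\min}(\rho)\,\tr(\1-M)\leq\tr\rho(\1-M)\leq\epsilon,
\end{equation}
so that $\tr(\1-M)\leq \epsilon/\lambda_{\min}(\rho)$.

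Next I would convert this trace bound on $\1-M$ into a bound on the objective. Because $\sigma$ is a density matrix we have $\sigma\leq\1$, and since $\1-M$ is positive, operator monotonicity of the trace gives $\tr(\1-M)\sigma\leq\tr(\1-M)\leq\epsilon/\lambda_{\min}(\rho)$. Hence $\tr M\sigma = 1-\tr(\1-M)\sigma \geq 1-\epsilon/\lambda_{\min}(\rho)$. As this holds for every feasible $M$, it lower-bounds the minimum, giving $\beta_\epsilon(\rho\|\sigma)\geq (\lambda_{\min}(\rho)-\epsilon)/\lambda_{\min}(\rho)$; taking $-\log$ yields the claimed bound.

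There is no serious obstacle here---the argument is a short sequence of operator-monotonicity steps. The only points requiring care are that the two operator inequalities are applied in the correct directions (the bound $\rho\geq\lambda_{\min}(\rho)\1$ turns the feasibility constraint into a trace bound on $\1-M$, whereas $\sigma\leq\1$ controls the objective), and that the hypothesis $\epsilon<\lambda_{\min}(\rho)$ is precisely what guarantees $\beta_\epsilon(\rho\|\sigma)>0$, so that $D_H^\epsilon(\rho\|\sigma)$ is finite and the logarithm on the right-hand side is well defined.
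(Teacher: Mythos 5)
Your proof is correct, and it takes a genuinely more elementary route than the paper's. The paper's proof fixes an \emph{optimal} measurement operator $M$ attaining $\beta_\epsilon(\rho\|\sigma)$, sorts the eigenvalues of $\rho$ and $M$, and invokes von Neumann's trace theorem plus the rearrangement inequality to show first that $M$ must be full rank and then that its smallest eigenvalue satisfies $m_{\min} \geq 1-\epsilon/\lambda_{\min}(\rho)$, i.e.\ the operator inequality $M \geq (1-\epsilon/\lambda_{\min}(\rho))\1$, from which $\tr M\sigma \geq 1-\epsilon/\lambda_{\min}(\rho)$ follows. You instead bound an \emph{arbitrary} feasible $M$ by two direct trace-monotonicity steps: $\rho \geq \lambda_{\min}(\rho)\1$ turns the feasibility constraint into $\tr(\1-M)\leq \epsilon/\lambda_{\min}(\rho)$, and $\sigma\leq\1$ converts that into $\tr(\1-M)\sigma\leq \epsilon/\lambda_{\min}(\rho)$. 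Your route dispenses with spectral decompositions, the von Neumann/rearrangement machinery, and even the existence of an optimizer (the bound holds uniformly over the feasible set), and your intermediate estimate is in fact stronger: $\tr(\1-M)\leq\epsilon/\lambda_{\min}(\rho)$ controls the \emph{sum} of the eigenvalue deficits of $M$, whereas the paper's argument controls only the largest one. What the paper's spectral analysis buys in exchange is structural information about the optimal test itself---it must be full rank with every eigenvalue at least $1-\epsilon/\lambda_{\min}(\rho)$---which your uniform argument does not yield, but which is not needed for the lemma or for its later use in the no-go theorems.
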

\endgroup
\begin{proof}
Suppose $M$ is an optimal measurement operator that attains $D_H^\epsilon(\rho\|\sigma)$. Then we have $0\leq M \leq \1$, $\tr \rho M \geq 1-\epsilon$ and $D_H^\epsilon(\rho\|\sigma) = -\log \tr M \sigma$. Denote the non-zero eigenvalues of $\rho$ and $M$ as $\{\lambda_i\}_{i=1}^d$ and $\{m_j\}_{j=1}^k$, which are both sorted in an non-increasing order. Let $\lambda_{\min} = \min_{i} \lambda_i$ and $m_{\min} = \min_{j} m_j$. Since $\rho$ is full rank, we have $k\leq d$. We first argue that $k = d$, i.e., $M$ is full rank. Suppose $k < d$, then we have 
\begin{align}
1-\epsilon \leq \tr \rho M \leq \sum_{i=1}^k \lambda_i m_i \leq \sum_{i=1}^{k}\lambda_i \leq 1 - \lambda_{\min},\label{eq: Lemma proof tmp1}
\end{align}
where the second inequality follows from the von Neumann’s trace theorem~\cite[Theorem 7.4.1.1]{Horn2012}, and the third inequality follows since $m_i \leq 1$. Therefore, Eq.~\eqref{eq: Lemma proof tmp1} contradicts to the assumption that $\epsilon < \lambda_{\min}$ and thus $k = d$.  Similar to Eq.~\eqref{eq: Lemma proof tmp1}, we have
\begin{align}
  1-\epsilon \leq \tr \rho M \leq \sum_{i=1}^d \lambda_i m_i \leq \left(\sum_{i=1}^{d-1}\lambda_i\right) + \lambda_{\min}m_{\min},\label{eq: Lemma proof tmp2}
\end{align}
where the last inequality follows from the rearrangement inequality and the fact that $m_i\leq 1$.
This implies $m_{\min} \geq 1-\epsilon/\lambda_{\min}$. Then we have $M \geq m_{\min} \1 \geq (1-\epsilon/\lambda_{\min}) \1$, and thus 
\begin{align}
  D_H^\epsilon(\rho\|\sigma) = -\log \tr M \sigma \leq -\log \tr (1-\epsilon/\lambda_{\min}) \sigma = \log \frac{\lambda_{\min}(\rho)}{\lambda_{\min}(\rho) - \epsilon}.
\end{align}
This completes the proof.
\end{proof}

\begingroup
\renewcommand{\theproposition}{S2}
\begin{lemma}\label{lem: flagged state DH error}
	 For any two flagged quantum states 
  $\rho_i  = p_i \ket{0}\bra{0}\ox \sigma_i + (1-p_i) \ket{1}\bra{1}\ox \omega_i$ with $i\in \{1,2\}$ and $p_i \in [0,1]$,  it holds that
$\beta_\epsilon(\rho_1\|\rho_2) \leq p_2 \beta_\epsilon(\sigma_1\|\sigma_2) + (1-p_2)\beta_\epsilon(\omega_1\|\omega_2)$.
\end{lemma}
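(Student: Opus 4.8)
The plan is to prove the bound by directly exhibiting a feasible measurement operator for the minimization defining $\beta_\epsilon(\rho_1\|\rho_2)$ whose objective value equals the right-hand side. Since $\beta_\epsilon$ is a minimum over all admissible operators, any single feasible choice furnishes an upper bound, so no optimization over the joint system is actually needed---we only need to transport the optimizers of the two ``sub-tests'' into a block-diagonal operator on $XB$.

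Concretely, first let $M_0$ and $M_1$ be operators attaining $\beta_\epsilon(\sigma_1\|\sigma_2)$ and $\beta_\epsilon(\omega_1\|\omega_2)$ respectively; by definition these satisfy $0\leq M_0,M_1\leq \1$, together with $\tr\sigma_1 M_0\geq 1-\epsilon$, $\tr\omega_1 M_1\geq 1-\epsilon$, and the objective identities $\tr\sigma_2 M_0=\beta_\epsilon(\sigma_1\|\sigma_2)$, $\tr\omega_2 M_1=\beta_\epsilon(\omega_1\|\omega_2)$. Then I would construct the flag-controlled operator
\begin{equation}
  M := \ketbra{0}{0}\ox M_0 + \ketbra{1}{1}\ox M_1 .
\end{equation}
Because $M$ is block diagonal across the orthogonal flag sectors, with each block sandwiched between $0$ and $\1$, it immediately satisfies $0\leq M\leq \1$, so $M$ is admissible as far as the operator constraint is concerned.

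The key verification is the success constraint $\tr\rho_1 M\geq 1-\epsilon$. Using the block structure of $\rho_1$ and the orthogonality of the flag states, the cross terms vanish and
\begin{equation}
  \tr\rho_1 M = p_1\,\tr\sigma_1 M_0 + (1-p_1)\,\tr\omega_1 M_1 \geq p_1(1-\epsilon)+(1-p_1)(1-\epsilon) = 1-\epsilon,
\end{equation}
which holds for every $p_1\in[0,1]$. Thus $M$ is feasible for the test $\beta_\epsilon(\rho_1\|\rho_2)$. I would emphasize this is the only place the argument could go wrong, and it works precisely because both $M_0$ and $M_1$ clear the same threshold $1-\epsilon$, so their $p_1$-convex combination does as well regardless of $p_1$.

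Finally, evaluating the objective on $\rho_2$ and invoking minimality gives the claim:
\begin{equation}
  \beta_\epsilon(\rho_1\|\rho_2) \leq \tr\rho_2 M = p_2\,\tr\sigma_2 M_0 + (1-p_2)\,\tr\omega_2 M_1 = p_2\,\beta_\epsilon(\sigma_1\|\sigma_2) + (1-p_2)\,\beta_\epsilon(\omega_1\|\omega_2).
\end{equation}
I do not anticipate a genuine obstacle here, as the result is essentially a ``weak duality'' style upper bound obtained by plugging in a natural block-diagonal ansatz; the mild subtlety worth stating explicitly is that feasibility on the first state uses the threshold $1-\epsilon$ uniformly, while the dependence on $p_2$ (and the absence of any dependence on $p_1$) appears only when computing the objective against the second state.
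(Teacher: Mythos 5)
Your proof is correct and follows essentially the same route as the paper's own argument: both construct the block-diagonal operator $\ket{0}\bra{0}\ox M_0 + \ket{1}\bra{1}\ox M_1$ from the two optimal sub-test operators, verify feasibility via the $p_1$-convex combination of the constraints, and evaluate the objective against $\rho_2$. No gaps; this matches the paper's proof of Lemma~S2.
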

\endgroup
\begin{proof}
Suppose $\beta_\epsilon(\sigma_1\|\sigma_2)$ and $\beta_\epsilon(\omega_1\|\omega_2)$ are achieved by optimal measurement operators $M$ and $N$ respectively. Then we can take $Q = \ket{0}\bra{0} \ox M + \ket{1}\bra{1} \ox N$. It is clear that $0 \leq Q \leq \1$ and $\tr Q \rho_1 = p_1\tr M \rho + (1-p_1) \tr N \omega_1 \geq p_1(1-\epsilon) + (1-p_1) (1-\epsilon) = 1-\epsilon$. So $Q$ is a feasible measurement operator for $\beta_\epsilon(\rho_1\|\rho_2)$. Thus we have
\begin{align}
    \beta_\epsilon(\rho_1\|\rho_2) &\leq \tr Q \rho_2  = p_2 \tr M \sigma_2 + (1-p_2) \tr N \omega_2 = p_2 \beta_\epsilon(\sigma_1\|\sigma_2) + (1-p_2) \beta_\epsilon(\omega_1\|\omega_2),
\end{align}
which completes the proof.
\end{proof}

\begingroup
\renewcommand{\theproposition}{S3}
\begin{lemma}\label{lem: robustness}
   For any linear suboperation $\cL$, there exists a free state $\omega\in\cF$ such that
    $\tr \cL(\omega) \geq (1+R(\rho))^{-1} \tr \cL(\rho)$, where  $R(\rho)\equiv \min\{s| \exists\, \sigma, s \geq 0, \text{ s.t. } (\rho+ s\sigma)/(1+s) \in \bcF\}$ is the generalized robustness of state $\rho$.
\end{lemma}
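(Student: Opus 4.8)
The plan is to exploit the definition of the generalized robustness directly to manufacture the required free state, and then to use only linearity and positivity of $\cL$ to obtain the trace inequality. First I would invoke the definition of $R(\rho)$: since $\bcF$ is topologically closed by assumption, the minimum in the defining optimization is attained, so there exists a state $\sigma$ for which the convex combination $\omega := (\rho + R(\rho)\,\sigma)/(1+R(\rho))$ lies in $\bcF$. This $\omega$ is precisely the free state whose existence the lemma asserts, so the only remaining task is to verify the claimed lower bound on $\tr \cL(\omega)$.

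Next I would apply $\cL$ to this identity. By linearity, $\cL(\omega) = (\cL(\rho) + R(\rho)\,\cL(\sigma))/(1+R(\rho))$, and taking the trace gives
\[
\tr \cL(\omega) = \frac{\tr \cL(\rho) + R(\rho)\,\tr \cL(\sigma)}{1+R(\rho)}.
\]
The key observation is that $\tr \cL(\sigma) \geq 0$: since $\cL$ is a suboperation it is completely positive, hence positive, so it maps the density operator $\sigma \geq 0$ to a positive semidefinite operator $\cL(\sigma) \geq 0$, whose trace is nonnegative. Combined with $R(\rho) \geq 0$, discarding the nonnegative term $R(\rho)\,\tr \cL(\sigma)$ yields
\[
\tr \cL(\omega) \geq \frac{\tr \cL(\rho)}{1+R(\rho)} = (1+R(\rho))^{-1}\,\tr \cL(\rho),
\]
which is exactly the claim.

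The argument is essentially immediate once the optimal robustness decomposition is in hand, so there is no genuine obstacle to overcome. The only points requiring a moment's care are the attainment of the minimum defining $R(\rho)$ (guaranteeing that a bona fide state $\sigma$ and free state $\omega$ exist, rather than merely a limiting sequence), which follows from the closedness of $\bcF$, and the positivity of $\cL$, which is built into the standing definition of a suboperation as a completely positive, trace-nonincreasing map. Everything else is a one-line consequence of linearity and the nonnegativity of $R(\rho)$.
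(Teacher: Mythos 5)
Your proof is correct and follows essentially the same route as the paper's: take the optimal robustness decomposition $\omega = (\rho + R(\rho)\sigma)/(1+R(\rho)) \in \bcF$, apply $\cL$, and use linearity together with $\tr\cL(\sigma)\geq 0$ to drop the second term. The paper leaves the final positivity step and the attainment of the minimum implicit, which you spell out explicitly, but there is no substantive difference in approach.
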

\endgroup
\begin{proof}
    By definition of $R(\rho)$, there exists $\o\in\cF$ such that $\omega = \frac{1}{1+R(\rho)}\rho + \frac{R(\rho)}{1+R(\rho)}\sigma$.  By linearity of $\cL$,
    \begin{equation}
       \tr \cL(\omega) = \frac{1}{1+R(\rho)}\tr \cL(\rho) + \frac{R(\rho)}{1+R(\rho)}\tr \cL(\sigma),
    \end{equation}
    and thus the desired bound directly follows.
\end{proof}

\section{No-go theorems for purification}

\noindent \textbf{Restatement of Theorem 1.}
Given any full-rank primitive state $\rho \not\in \bcF$ and any pure target resource state $\psi \not\in \bcF$, the following relation between the  success probability $p$ and transformation error $\epsilon$ must hold for any free probabilistic protocol:
\begin{equation}
\frac{\epsilon}{p} \geq  \frac{\lambda_{\min}(\rho)(1-f_\psi)}{1+R(\rho)}. \label{eq:prob_purif}
\end{equation}
where $\lambda_{\min}(\rho)$ is the smallest eigenvalue of $\rho$, $f_\psi:= \max_{\o\in \bcF} \tr (\ket{\psi}\bra{\psi} \o)$ is the maximum overlap between $\psi$ and free states $\bcF$, and $R(\rho):= \min\{s| \exists\, \sigma, s \geq 0, \text{ s.t. } (\rho+ s\sigma)/(1+s) \in \bcF\}$ is the generalized robustness of state $\rho$.  For the deterministic case ($p=1$), the bound can be improved to 
\begin{equation}
 \epsilon \geq \lambda_{\min}(\rho)(1-f_\psi).     \label{eq:determ_purif}
\end{equation}

\begin{proof}
(Deterministic case.)
Suppose there is a free operation $\cE$ such that $\cE(\rho) = \sigma$ and  $\tr \psi \sigma \geq 1-\epsilon$ with $\epsilon < \ve(\rho,\psi) \equiv \lambda_{\min}(\rho)(1-f_\psi)$. Consider the quantum test $\{\psi,\1-\psi\}$, we have $D_H^\epsilon(\sigma\|\omega) \geq -\log \tr \psi\omega$ for all $\omega \in \bcF$. Then the following chain of inequalities holds
\begin{align}
\hspace{-0.2cm} -\log f_\psi &  \leq \min_{\o\in\bcF}D_{H}^\epsilon(\sigma\|\o)\\
& = \min_{\o\in\bcF}D_{H}^\epsilon(\cE(\rho)\|\o)\\
& \leq \min_{\o\in\boldsymbol \cF}D_{H}^\epsilon(\cE(\rho)\|\cE(\o))\\
& \leq \min_{\o\in\boldsymbol \cF}D_{H}^\epsilon(\rho\|\o)\\
& \leq  \log \left({\lambda_{\min}(\rho)}/({\lambda_{\min}(\rho) - \epsilon})\right)
\end{align}
where the second line follows from the assumption that $\cE(\rho) = \sigma$, the third line follows since it is minimized over a smaller set due to $\cE(\omega) \in \bcF$, $\forall \omega\in \bcF$, the fourth line follows from the data-processing inequality of $D_H^\epsilon$ \cite{WangRenner12:hypothesis_testing}, the last line follows from the continuity bound in Lemma~\ref{lem: continuity bound} (applicability guaranteed by the assumption $\epsilon <  \ve(\rho,\psi)$). A simple calculation gives us $\varepsilon \geq \lambda_{\min}(\rho)(1-f_\psi)$, which forms a contradiction with our assumption.

(Probabilistic case.)
Suppose there is a probabilistic protocol $\cE_{A\to FB} (\rho_A) = \ket{0}\bra{0}_F\ox \cL_{A\to B}(\rho_A) + \ket{1}\bra{1}_F \ox \cG_{A\to B}(\rho_A)$ such that $\cL_{A\to B}(\rho_A) = p \sigma_B$ and $\tr \sigma \psi \geq 1-\epsilon$ with $\epsilon < p(1+R(\rho))^{-1}\ve(\rho,\psi)$. Based on Lemma~\ref{lem: robustness}, let us consider a free state $\omega_1$ such that $\tr \cL( \omega_1) \geq (1+R(\rho))^{-1} \tr \cL(\rho)$ and take $\omega_2 = \cL(\o_1)/\tr \cL(\o_1)$. Then we have 
\begin{align}\label{eq: nogo 2 tmp1}
f_\psi = \max_{\o \in \bcF} \tr \omega \psi \geq \tr  \omega_2 \psi \geq \beta_\epsilon(\sigma\|\o_2),
\end{align}
where the last inequality follows by considering the quantum test $\{\psi, \1-\psi\}$. On the other hand, we have
\begin{align}
	 \beta_\epsilon(\sigma\|\omega_2) & = \beta_\epsilon\left(\frac{\cL(\rho)}{\tr \cL(\rho)}\Big\|\frac{\cL(\o_1)}{\tr \cL(\o_1)}\right)\\
	 & \geq \frac{\beta_\epsilon(\cE(\rho)\|\cE(\o_1)) - (1-\tr \cL(\o_1))}{\tr \cL(\o_1)}\\
	 & \geq \frac{\beta_\epsilon(\rho\|\o_1) - (1-\tr \cL(\o_1))}{\tr \cL(\o_1)}\\
	 & \geq \frac{(\lambda_{\min}(\rho)-\epsilon)/\lambda_{\min}(\rho) - (1-\tr \cL(\o_1))}{\tr \cL(\o_1)},\label{eq: nogo 2 tmp2}
\end{align}
where the second line follows from Lemma~\ref{lem: flagged state DH error}, the third line follows from the data-processing inequality, and the last line follows from the continuity bound in Lemma~\ref{lem: continuity bound} (applicability guaranteed by the assumption $\epsilon < p(1+R(\rho))^{-1}\ve(\rho,\psi) \leq \ve(\rho,\psi)$) and the assumption that $\rho$ is full rank.
Combining~\eqref{eq: nogo 2 tmp2} with~\eqref{eq: nogo 2 tmp1}, we have $\epsilon \geq \tr \cL(\o_1)\ve(\rho,\psi)$. Recall that $\o_1$ is defined as a free state such that $\tr \cL( \omega_1) \geq (1+R(\rho))^{-1} \tr \cL(\rho) = (1+R(\rho))^{-1} p$, we obtain $\epsilon \geq p(1+R(\rho))^{-1}\ve(\rho,\psi)$, which forms a contradiction with our assumption. 
\end{proof}

\begin{remark}
     A slightly weaker version of Eq.~(\ref{eq:determ_purif}) is recovered by letting $p=1$.
It is also possible to get rid of the $(1+R(\rho))^{-1}$ factor and obtain a stronger bound that covers Eq.~(\ref{eq:determ_purif}) under certain restrictions.  For example, suppose the theory admits a resource destroying channel \cite{LiuHuLloyd17:rdmap} $\Lambda$, and the allowed free suboperations are those commuting with $\Lambda$ (such as dephasing-covariant incoherent suboperations for coherence theory~\cite{Fang2018}).  Then for any free suboperation $\cL$, it holds that $\tr \cL(\rho) = \tr \Lambda\circ \cL(\rho) = \tr \cL \circ \Lambda (\rho)$, which indicates that there always exists a free state $\omega = \Lambda (\rho)$ such that $\tr\cL(\o) = p$, and therefore the bound reduces to $\epsilon/p \geq \ve(\rho,\psi)$.
\end{remark}

\section{No-go theorem for unitary channel simulation}

Analogous to the resource theories of quantum states, a resource theory of quantum channels can also be built upon two basic ingredients: the set of free channels $\bcO$ and the set of free superchannels (maps from channels to channels) $\bTheta$, with the golden rule $\bTheta(\bcO) \subseteq \bcO$ which selects all superchannels that can be possibly allowed (including those admitting implementations by free combs considered in \cite{LiuWinter2018}).
Now the general question becomes whether there exists a free superchannel $\Pi \in \bTheta$ that maps one quantum channel $\cN$ to another quantum channel $\cM$, i.e.,
$\Pi(\cN_{A\to B}) = \cM_{C\to D}$.      If so, we say that $\cM$ can be simulated by $\cN$.

In analogy to the state distillation tasks where one aims to turn a noisy state into a pure one, here we want to turn a noisy channel into a unitary one, which preserves information.  Below we show an elementary channel version of the no-purification theorems, which says that perfect simulation of unitary channels are generically impossible. 

We say a quantum channel $\cN$ has free component if there exists free channel $\cE \in \bcO$ and another quantum channel $\cM$ such that $\cN = p \cE + (1-p) \cM$ with $p > 0$.
We also need the definition of the channel's min-relative entropy and its monotonicity under superchannels.
The channel's min-relative entropy is defined as 
\begin{align}
  D_{\min}(\cN\|\cM) & \equiv \sup_{\rho_{AA'}} D_{\min}(\cI_{A} \ox \cN_{A'\to B}(\rho_{AA'})\|\cI_A \ox \cM_{A'\to B}(\rho_{AA'}))
\end{align}
where $D_{\min}(\rho\|\sigma)\equiv D_H^0(\rho\|\sigma)$, the supremum is taken over all quantum states on systems $AA'$ and $\cI$ is the identity map. The monotonicity of the channel's min-relative entropy holds as follows:

\begingroup
\renewcommand{\theproposition}{S4}
\begin{lemma}\label{Dmin channel DPI}
  For any superchannel $\Pi$ and quantum channels $\cN_{A'\to B}$ and $\cM_{A'\to B}$, it holds $D_{\min}(\Pi(\cN)\|\Pi(\cM)) \leq D_{\min}(\cN\|\cM)$.
\end{lemma}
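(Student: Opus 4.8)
The plan is to reduce this channel-level monotonicity to the already-established state-level data-processing inequality for $D_H^0$, using the structural (realization) theorem for quantum superchannels. Recall that any superchannel $\Pi$ taking channels $A'\to B$ to channels $C\to D$ admits a realization with a memory (ancilla) system $E$ together with fixed pre- and post-processing channels $\cW^{\text{pre}}_{C\to A'E}$ and $\cW^{\text{post}}_{BE\to D}$, namely
\[
\Pi(\cN)_{C\to D} = \cW^{\text{post}}_{BE\to D}\circ(\cN_{A'\to B}\ox\cI_E)\circ\cW^{\text{pre}}_{C\to A'E},
\]
and likewise for $\cM$ with the \emph{same} $\cW^{\text{pre}}$ and $\cW^{\text{post}}$. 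First I would fix an arbitrary reference system $R$ and input state $\xi_{RC}$; since the supremum in the definition of $D_{\min}(\Pi(\cN)\|\Pi(\cM))$ ranges over all such $\xi$, it suffices to bound the objective for each fixed $\xi$ and take the supremum at the end.

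Concretely, define $\sigma_{RA'E} := (\cI_R\ox\cW^{\text{pre}}_{C\to A'E})(\xi_{RC})$, a legitimate state on $RA'E$. Then both $\cI_R\ox\Pi(\cN)(\xi_{RC})$ and $\cI_R\ox\Pi(\cM)(\xi_{RC})$ are obtained from $(\cI_{RE}\ox\cN_{A'\to B})(\sigma_{RA'E})$ and $(\cI_{RE}\ox\cM_{A'\to B})(\sigma_{RA'E})$, respectively, by applying one common CPTP map $\cI_R\ox\cW^{\text{post}}_{BE\to D}$. The key step is then to invoke the data-processing inequality of the state min-relative entropy $D_{\min}=D_H^0$ (the $\epsilon=0$ instance of the DPI of \cite{WangRenner12:hypothesis_testing} already used in the state proof) under this post-processing channel, giving
\[
D_{\min}\big(\cI_R\ox\Pi(\cN)(\xi_{RC})\big\|\cI_R\ox\Pi(\cM)(\xi_{RC})\big) \le D_{\min}\big((\cI_{RE}\ox\cN)(\sigma_{RA'E})\big\|(\cI_{RE}\ox\cM)(\sigma_{RA'E})\big).
\]

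Now the crucial observation is that the right-hand side has exactly the form appearing in the definition of $D_{\min}(\cN\|\cM)$, with the composite system $RE$ playing the role of the reference and $\sigma_{RA'E}$ a valid input state on $(\text{reference})\ox A'$. Hence the right-hand side is at most $\sup_{\rho}D_{\min}(\cI\ox\cN(\rho)\|\cI\ox\cM(\rho))=D_{\min}(\cN\|\cM)$. Taking the supremum over $\xi_{RC}$ on the left then yields $D_{\min}(\Pi(\cN)\|\Pi(\cM))\le D_{\min}(\cN\|\cM)$, as claimed.

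The main obstacle — indeed essentially the only nontrivial point — is the bookkeeping of the ancillary memory $E$: it must be reabsorbed into the reference system when one appeals to the definition of $D_{\min}(\cN\|\cM)$, which is legitimate precisely because that definition optimizes over reference systems of \emph{arbitrary} dimension. Once the superchannel realization is in hand, the fixed post-processing $\cW^{\text{post}}$ is handled by the state DPI and the pre-processing $\cW^{\text{pre}}$ is absorbed into the optimized input, so no new estimates are required.
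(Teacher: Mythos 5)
Your proposal is correct and follows essentially the same route as the paper's proof: realize the superchannel $\Pi$ via fixed pre- and post-processing channels with a memory system (citing the Chiribella et al.\ structure theorem), apply the state-level data-processing inequality of $D_{\min}$ to the common post-processing, and absorb the pre-processing output (memory included) into the reference system optimized over in the definition of $D_{\min}(\cN\|\cM)$. The only difference is cosmetic: you bound the objective for an arbitrary input state and take the supremum at the end, whereas the paper assumes an optimal input state is achieved, which is a minor bookkeeping variation rather than a different argument.
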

\endgroup
\begin{proof}
Note that any superchannel $\Pi$ can be implemented by pre- and post- quantum processings as $\Pi(\cN_{A\to B}) = \cD_{BR\to D} \circ (\cN_{A\to B}\ox \id_{R\to R})\circ \cE_{C \to AR}$, where $\cE,\cD$ are channels \cite{Chiribella_2008}.
 Suppose $\Pi$ is realized by $\{\cE,\cD\}$ and the optimal solution of $D_{\min}(\Pi(\cN)\|\Pi(\cM))$ is achieved by state $\rho_{CE}$. Denote $\sigma_{ARE} = \cE_{C \to AR} (\rho_{CE})$. Then we have the following chain of inequalities,
  \begin{align}
  & D_{\min}(\Pi(\cN)\|\Pi(\cM))\notag\\
  & = D_{\min}(\cD_{BR\to D} \circ \cN_{A\to B} (\sigma_{ARE}) \|\cD_{BR\to D} \circ \cM_{A\to B}(\sigma_{ARE}))\\
  & \leq D_{\min}(\cN_{A\to B}(\sigma_{ARE})\|\cM_{A\to B}(\sigma_{ARE}))\\
  & \leq D_{\min}(\cN\|\cM),
  \end{align}
  where the first inequality follows from the data-processing inequality of $D_{\min}$, and the second inequality follows by definition.
\end{proof}

\begingroup
\renewcommand{\theproposition}{S1}
\begin{theorem}[No-go theorem for channel simulation]\label{thm: channel nogo 1}
 Given any primitive channel $\cN_{A\to B} \not\in \boldsymbol\cO(A\to B)$ which has free component, and any target unitary resource channel $\cU_{C\to D} \not\in \boldsymbol\cO(C\to D)$, there is no free superchannel $\Pi$ transforming $\cN_{A\to B}$ to $\cU_{C\to D}$.
\end{theorem}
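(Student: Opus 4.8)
The plan is to mirror the proof of Theorem~\ref{thm: nogo 1} in the channel setting, with the channel min-relative entropy $D_{\min}$ playing the role of $\mathfrak{D}_H^0$ and the ``has free component'' hypothesis playing the role of the full-rank hypothesis. The skeleton has three parts: (i) the free component of $\cN$ makes $D_{\min}(\cN\|\cE)$ vanish; (ii) monotonicity under the free superchannel together with the golden rule pushes this all the way down to $D_{\min}(\cU\|\cF)=0$ for some free channel $\cF$; and (iii) the unitarity of $\cU$ upgrades this vanishing into an exact channel equality $\cF=\cU$, contradicting $\cU\notin\bcO$.

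First I would record the explicit form of the zero-error min-relative entropy of states: for any $\alpha,\beta$ one has $D_{\min}(\alpha\|\beta)=-\log\tr(P_\alpha\beta)$, where $P_\alpha$ is the projector onto $\supp(\alpha)$; in particular $D_{\min}(\alpha\|\beta)=0$ whenever $\supp(\beta)\subseteq\supp(\alpha)$, since then $P_\alpha\beta=\beta$ and $\tr(P_\alpha\beta)=1$. Now, since $\cN$ has a free component, write $\cN=p\cE+(1-p)\cM$ with $p>0$ and $\cE\in\bcO$. Then $\cN-p\cE=(1-p)\cM$ is completely positive, so for every input $\rho_{RC}$ we have $\cI\ox\cN(\rho)\geq p\,\cI\ox\cE(\rho)\geq 0$, whence $\supp(\cI\ox\cE(\rho))\subseteq\supp(\cI\ox\cN(\rho))$. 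Applying the support criterion, every term in the supremum defining $D_{\min}(\cN\|\cE)$ vanishes, so $D_{\min}(\cN\|\cE)=0$. This is the exact channel analog of the fact (Lemma~\ref{lem: continuity bound} at $\epsilon=0$) that a full-rank state has vanishing min-relative entropy to every state.

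Next, suppose for contradiction that there is a free superchannel $\Pi\in\bTheta$ with $\Pi(\cN)=\cU$. The golden rule $\bTheta(\bcO)\subseteq\bcO$ gives $\cF:=\Pi(\cE)\in\bcO$, and the monotonicity of the channel min-relative entropy (Lemma~\ref{Dmin channel DPI}) yields $D_{\min}(\cU\|\cF)=D_{\min}(\Pi(\cN)\|\Pi(\cE))\leq D_{\min}(\cN\|\cE)=0$. Since $D_{\min}\geq 0$, this forces $D_{\min}(\cU\|\cF)=0$. To finish, I would exploit that $\cU$ is unitary: taking $\rho$ to be the maximally entangled state $\Phi_{RC}$ in $D_{\min}(\cU\|\cF)=\sup_\rho D_{\min}(\cI\ox\cU(\rho)\|\cI\ox\cF(\rho))$, the output $\cI\ox\cU(\Phi_{RC})=\proj{\phi_\cU}$ is the \emph{pure} Choi state of $\cU$. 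For a pure target, $0=D_{\min}(\proj{\phi_\cU}\|\cI\ox\cF(\Phi_{RC}))=-\log\braandket{\phi_\cU}{\cI\ox\cF(\Phi_{RC})}{\phi_\cU}$ forces $\braandket{\phi_\cU}{\cI\ox\cF(\Phi_{RC})}{\phi_\cU}=1$, i.e.\ the Choi state of $\cF$ equals $\proj{\phi_\cU}$. By the Choi--Jamio\l{}kowski isomorphism this means $\cF=\cU$, contradicting $\cF\in\bcO$ while $\cU\notin\bcO$.

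The conceptual heart of the argument is the last step: the rigidity provided by unitarity --- equivalently, the purity of the Choi state --- is what converts a vanishing min-relative entropy into exact channel equality, just as the purity of $\psi$ made $-\log f_\psi>0$ in the state case. I expect the main care to be needed precisely there, in verifying that a single well-chosen input (the maximally entangled $\Phi_{RC}$ with its reference) already saturates the support obstruction, so that $D_{\min}(\cU\|\cF)=0$ cannot hold for any free $\cF\neq\cU$. The first step (vanishing of $D_{\min}(\cN\|\cE)$) and the second (monotonicity plus the golden rule) are then routine given Lemma~\ref{Dmin channel DPI}.
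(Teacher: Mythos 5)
Your proposal is correct and follows essentially the same route as the paper's own proof: both hinge on the same three ingredients, namely that the free component forces $D_{\min}(\cN\|\cE)=0$, that Lemma~\ref{Dmin channel DPI} plus the golden rule transports this to the target, and that purity of the unitary's Choi state provides the rigidity yielding a contradiction with $\cU\notin\bcO$. The only (cosmetic) difference is that you track the single free channel $\cF=\Pi(\cE)$ and conclude $\cF=\cU$ from $\bra{\phi_\cU}J_\cF\ket{\phi_\cU}=1$, whereas the paper states the contrapositive as strict positivity of $\min_{\cE\in\bcO}D_{\min}(\cU\|\cE)$ and chains minima over the free set; your formulation incidentally sidesteps the question of whether that minimum over $\bcO$ is attained.
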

\endgroup

\begin{proof}
We prove this by contradiction. Suppose there is a free superchannel $\Pi$ such that $\Pi(\cN) = \cU$. Denote $J_{\cU} = \cI_R\ox \cU_A(\Phi_{RA})$ as the Choi state of $\cU$ and $\Phi_{RA} = \frac{1}{|A|} \sum_{i,j=1}^{|A|} \ket{ii}\bra{jj}$ is the maximally entangled state. Since $\cU$ is a unitary channel, we know that $J_{\cU}$ is a pure state. By the assumption $\cU \not\in \bcO$, for any $\cE \in \bcO$ it holds that $D_{\min}(\cU\|\cE) \geq D_{\min}(J_{\cU}\|J_{\cE}) = -\log \tr J_{\cU} J_{\cE} > 0$. On the other hand, suppose $\cN = p\cE + (1-p)\cM$ with $\cE \in \bcO$, we have $D_{\min}(\cN\|\cE) = D_{\min}(p\cE + (1-p)\cM\|\cE) = 0$. Then we have
\begin{align}
    0 < \min_{\cE \in \bcO} D_{\min}(\cU\|\cE) \leq \min_{\cE \in \bcO} D_{\min}(\cU\|\Pi(\cE)) =  \min_{\cE \in \bcO} D_{\min}(\Pi(\cN)\|\Pi(\cE)) \leq \min_{\cE \in \bcO} D_{\min}(\cN\|\cE) = 0,
\end{align}
where the second inequality follows since the minimization is restrict to $\cE \in \Pi(\boldsymbol \cO)$  on the r.h.s., and the third inequality follows from the monotonicity property, Lemma~\ref{Dmin channel DPI}. This forms a contradiction. 
\end{proof}
The optimal rate of simulating a noiseless quantum channel corresponds to the well-studied quantum capacity (see e.g.~\cite{fang2018quantum,Fang2020simulation}). 
As a result, the above theorem implies that the zero-error quantum capacity of a quantum channel with free component, such as the quantum depolarizing channel, is zero.

\end{document}